\title{A Robust Specification Theory \\ for Modal Event-Clock Automata}
\author{Uli Fahrenberg \qquad Axel Legay \quad\mbox{}
  \institute{Irisa / INRIA \\ Rennes, France}}
\begin{document}

\maketitle

\begin{abstract}
  In a series of recent work, we have introduced a general framework for
  quantitative reasoning in specification theories.
  % introduced quantitative modal automata, a quantitative extension of
  % the classical theory of modal specification systems defined proposed
  % by Larsen.
  The contribution of
  this paper is to show how this framework can be applied to yield a
  robust specification theory 
  % quantitative modal automata can be used as a general framework to
  % define a notion of robustness
  for timed specifications.
\end{abstract}

\section{Introduction}

Specification theories allow to reason about behaviors of systems at
the abstract level, which is needed in various application such as
abstraction-based model checking for programming languages, or
compositional reasoning. Depending on the application for which they
are used, such specification theories may come together with (1) a
satisfaction relation that allows to decide whether an implementation
is a model of the specification, (2) a notion of refinement for
determining the relationship between specifications and their set of
implementations, (3) a structural composition which at the abstract
level mimics the behavioral composition of systems, (4) a quotient
that allows to synthesize specifications from refinements, and (5) a
logical composition that allows to compute intersections of sets of
implementations, \cf~\cite{DBLP:conf/fase/BauerDHLLNW12}.

Prominent among existing specification theories, outside logics, is the
one of \emph{modal transition
  systems}~\cite{DBLP:conf/avmfss/Larsen89,DBLP:conf/ictac/BenesKLS09,SCU11,Nyman08Thesis,DBLP:conf/fmoods/GrulerLS08,DBLP:conf/concur/GodefroidHJ01,DBLP:conf/vmcai/GrumbergLLS05}
which are labeled transition systems equipped with two types of
transitions: \must~transitions that are mandatory for any
implementation, and \may~transitions which are optional for an
implementation. So far, existing modal specification theories have
relied on Boolean versions of both the refinement and the satisfaction
relation. They are hence \emph{fragile} in the sense that they are
unable to quantify the impact of small variations of the behavior of the
environment in which a component is working. In a series of recent
work~\cite{journals/mscs/BauerJLLS11,DBLP:conf/mfcs/BauerFJLLT11,conf/csr/BauerFLT12},
and building on a general theory of quantitative analysis of
systems~\cite{journals/jlap/ThraneFL10,journals/cai/FahrenbergLT10,DBLP:journals/tcs/LarsenFT11,DBLP:journals/corr/abs-1107-1205,conf/fsttcs/FahrenbergLT11},
we have leveraged this problem by extending modal specifications from
the Boolean to the quantitative world and introducing truly quantitative
versions of the operators mentioned above.

The contribution of this paper is to show how our general quantitative
framework from~\cite{conf/csr/BauerFLT12} can be used to define a notion
of robustness for timed modal specifications, or model event-clock
specifications (MECS)~\cite{DBLP:conf/icfem/BertrandLPR09}.  We first
observe that the notion of refinement proposed
in~\cite{DBLP:conf/icfem/BertrandLPR09} is not adequate to reason on
MECS in a robust manner. We then propose a new version of refinement
that can capture quantitative phenomena in a realistic manner, and
proceed to exhibit the properties of the above specification-theory
operators with respect to this quantitative refinement.  We show that
structural composition and quotient have properties which are useful
generalizations of their standard Boolean properties, hence they can be
employed for robust reasoning on MECS without problem.  Conjunction, on
the other hand, is generally not robust (similarly to the problems
exposed in~\cite{DBLP:conf/mfcs/BauerFJLLT11}), but together with the
new operator of quantitative widening can be used in a robust manner.

% . Our refinement is based on a max-lead distance and cannot thus be
% handled by the framework in~\cite{DBLP:conf/mfcs/BauerFJLLT11}.  We
% instantiate our framework to get a complete robust specification
% theory for these models, using a system distance which is appropriate
% for this context and cannot be handled by the framework
% in~\cite{DBLP:conf/mfcs/BauerFJLLT11}.

\section{Quantitative Specification Theories}

General quantitative specification theories have been introduced
in~\cite{conf/csr/BauerFLT12}.  These consist of
\begin{itemize}
\item a specification formalism: modal transition systems with labels
  drawn from a set $\Spec$,
\item a distance on traces of labels: $d_T: \Spec\times \Spec\to
  \Realnn$, and
\item operations on specifications which allow high-level reasoning and
  which generally are continuous with respect to the natural distance on
  specifications induced by the trace distance.
\end{itemize}
Below we give a more detailed account of these things, in order to be
able to apply them to modal event-clock specifications later.

\subsection{Structured Modal Transition Systems}

We assume that the set $\Spec$ of labels comes with a partial order
$\sqsubseteq_\Spec$ modeling \emph{refinement} of data: if
$k\sqsubseteq_\Spec \ell$, then $k$ is more refined (leaves fewer
choices) than $\ell$.  The set $\Imp= \{ k\in \Spec\mid
k'\sqsubseteq_\Spec k\Longrightarrow k'= k\}$ is called the set of
\emph{implementation labels}; these are the data which cannot be refined
further.

We let $\llbracket k\rrbracket=\{ k'\in \Imp\mid k'\sqsubseteq k\}$
denote the set of implementation refinements of a label $k$, and we
assume that $\Spec$ is well-formed in the sense that $\llbracket
k\rrbracket\ne \emptyset$ for all $k\in \Spec$: any specification label
can be implemented.

A \emph{structured modal transition system} (SMTS) is a tuple $( S, s_0,
\mmayto_S, \mmustto_S)$ consisting of a set $S$ of states, an initial
state $s_0\in S$, and \must and \may transitions $\mmustto_S,
\mmayto_S\subseteq S\times \Spec\times S$ for which it holds that for
all $s\mustto[ k]_S s'$ there is $s\mayto[ \ell]_S s'$ with
$k\sqsubseteq_\Spec \ell$.  This last condition is one of
\emph{consistency}: everything which is required, is also allowed.

An SMTS $( S, s_0, \mmayto_S, \mmustto_S)$ is an \emph{implementation}
if $\mmustto_S= \mmayto_S\subseteq S\times \Imp\times S$, \ie~an
ordinary labeled transition system with labels in $\Imp$.  Hence in an
implementation, all optional behavior has been resolved, and all data
has been refined to implementation labels.

A \emph{modal refinement} of SMTS $S$, $T$ is a relation $R\subseteq
S\times T$ such that for any $( s, t)\in R$,
\begin{itemize}
\item whenever $s\mayto[ k]_S s'$, then also $t\mayto[ \ell]_T t'$ for
  some $k\sqsubseteq_\Spec \ell$ and $( s', t')\in R$,
\item whenever $t\mustto[ \ell]_T t'$, then also $s\mustto[ k]_S s'$ for
  some $k\sqsubseteq_\Spec \ell$ and $( s', t')\in R$.
\end{itemize}
Thus any behavior which is permitted in $S$ is also permitted in $T$,
and any behavior required in $T$ is also required in $S$.  We write
$S\le_m T$ if there is a modal refinement $R\subseteq S\times T$ with $(
s_0, t_0)\in R$, and $S\equiv_m T$ if there is a two-sided refinement
$S\le_m T$ and $T\le_m S$.

The \emph{implementation semantics} of a SMTS $S$ is the set $\llbracket
S\rrbracket=\{ I\le_m S\mid I\text{ is an implementation}\}$, and we
write $S\le_t T$ if $\llbracket S\rrbracket\subseteq \llbracket
T\rrbracket$, saying that $S$ thoroughly refines $T$.

\subsection{Distances}

The above setting is purely \emph{qualitative}, \ie~Boolean: a
refinement $S\le_m T$ either holds, or it does not; a transition system
$I$ either is an implementation of a specification $S$, or it is not.
In order to turn this setting into a \emph{quantitative} one, where we
can reason about \emph{robustness} of refinements and implementations,
we need to introduce \emph{distances}.

We have in~\cite{conf/fsttcs/FahrenbergLT11} developed a general
framework which allows to reason about a variety of such system
distances in a uniform way.  To apply this to specifications, let
$\Spec^\infty= \Spec^*\cup \Spec^\omega$ denote the set of finite and
infinite traces over $\Spec$, and let $d_T: \Spec^\infty\times
\Spec^\infty\to \Realnn\cup\{ \infty\}$ be an extended hemimetric.
Recall that this means that $d_T( \sigma, \sigma)= 0$ for all $\sigma\in
\Spec^\infty$, and that $d_T( \sigma_1, \sigma_2)+ d_T( \sigma_2,
\sigma_3)\ge d_T( \sigma_1, \sigma_3)$ for all $\sigma_1, \sigma_2,
\sigma_3\in Spec^\infty$.  Note that as $\Spec\subseteq \Spec^\infty$,
$d_T$ induces a hemimetric on $\Spec$.

Let $M$ be an arbitrary set and $\LL=( \Realnn\cup\{ \infty\})^M$ the
set of functions from $M$ to the extended non-negative real line.  Then
$\LL$ is a complete lattice with partial order
$\mathord{\sqsubseteq_\LL}$ given by $\alpha\sqsubseteq_\LL \beta$ if
and only if $\alpha( x)\le \beta( x)$ for all $x\in M$, and with an
addition $\oplus_\LL$ given by $( \alpha\oplus_\LL \beta)( x)= \alpha(
x)+ \beta( x)$.  The bottom element of $\LL$ is also the zero of
$\oplus_\LL$ and given by $\bot_\LL( x)= 0$, and the top element is
$\top_\LL( x)= \infty$.  We also define a metric on $\LL$ by $d_\LL(
\alpha, \beta)= \sup_{ x\in M}| \alpha( x)- \beta( x)|$.

Let $F: \Spec\times \Spec\times \LL\to \LL$ be a function with the
following properties:
\begin{itemize}
\item $F$ is continuous in the first two coordinates: $F( \cdot, k,
  \alpha)$ and $F( k, \cdot, \alpha)$ are continuous functions $\Imp\to
  \LL$ for all $k\in \Spec$, $\alpha\in \LL$.
\item $F$ is monotone in the third coordinate: $F( k, \ell, \cdot)$ is a
  monotone function $\LL\to \LL$ for all $k, \ell\in \Spec$.
\item $F( \cdot, \cdot, \bot_\LL)$ extends $d_T$: for all $k, \ell\in
  Spec$, $F( k, \ell, \bot_\LL)= d_T( k, \ell)$.
\item $F$ acts as a Hausdorff metric~\cite{munkres2000topology} when
  specification labels are viewed as sets of implementation labels: for
  all $k, \ell\in \Spec$ and $\alpha\in \LL$, $F( k, \ell, \alpha)=
  \sup_{ m\in \llbracket k\rrbracket} \inf_{ n\in \llbracket
    \ell\rrbracket} F( m, n, \alpha)$.
\item Sets of implementation labels are closed with respect to $F$: for
  all $k, \ell\in \Spec$ and $\alpha\in \LL$ with $F( k, \ell,
  \alpha)\ne \top_\LL$, there are $m\in \llbracket k\rrbracket$, $n\in
  \llbracket \ell\rrbracket$ with $F( m, \ell, \alpha)= F( k, n,
  \alpha)= F( k, \ell, \alpha)$.
  % \item $F$ satisfies an extended triangle inequality:\uli{More
  %     general version.} for all $k, \ell, m\in \Spec$ and $\alpha,
  %   \beta, \gamma\in \LL$ with $\alpha\oplus_\LL \beta\sqsupseteq_\LL
  %   \gamma$, $F( k, \ell, \alpha)\oplus_\LL F( \ell, m,
  %   \beta)\sqsupseteq_\LL F( k, m, \gamma)$.
\item $F$ satisfies an extended triangle inequality: for all $k, \ell,
  m\in \Spec$ and $\alpha, \beta\in \LL$, $F( k, \ell, \alpha)\oplus_\LL
  F( \ell, m, \beta)\sqsupseteq_\LL F( k, m, \alpha\oplus_\LL \beta)$.
\end{itemize}

As the last ingredients, let $h_T: \Spec^\infty\times \Spec^\infty\to
\LL$ and $g: \LL\to \Realnn\cup\{ \infty\}$ be functions such that $g$
is monotone with $g( \bot_\LL)= 0$, $g( \alpha)\ne \infty$ for
$\alpha\ne\top_\LL$, and $g\circ h_T= d_T$, and such that $h_T$ has a
recursive characterization, using $F$, as follows:
\begin{equation}
  \label{eq:trdist}
  h_T( \sigma, \tau)=
  \begin{cases}
    F( \sigma_0, \tau_0, h_T( \sigma^1, \tau^1)) &\text{if }
    \sigma, \tau\ne \emptyseq, \\
    \top_\LL &\text{if } \sigma= \emptyseq, \tau\ne \emptyseq \text{ or }
    \sigma\ne \emptyseq, \tau= \emptyseq, \\
    \bot_\LL &\text{if } \sigma= \tau= \emptyseq.
  \end{cases}
\end{equation}
Here $\emptyseq\in \Specseq$ denotes the empty sequence, and for any
$\sigma\in \Specseq$, $\sigma_0$ denotes its first element and
$\sigma^1$ the tail of $\sigma$ with the first element removed.

For technical reasons, we will work mostly with the auxiliary function
$h_T: \Spec^\infty\times \Spec^\infty\to \LL$ below instead of the
distance $d_T$; indeed, the framework in~\cite{conf/csr/BauerFLT12} has
been developed completely without reference to the distance $d_T$ which,
from a point of view of applications, should be the actual function of
interest.  This is due to the fact that the recursive characterization
in~\eqref{eq:trdist} needs to ``live'' in $\LL$ to be applicable to
non-trivial distances, \cf~\cite{conf/fsttcs/FahrenbergLT11}.

We assume all SMTS to be \emph{compactly
  branching}~\cite{Breugel95-Metric}, that is, for any SMTS $S$ and any
$s\in S$, the sets $\{ k\in \Spec\mid s\mayto[ k] s'\}$ and $\{ k\in
\Spec\mid s\mustto[ k] s'\}$ are to be compact under the hemimetric
$d_T$.  A SMTS $S$ is said to be \emph{deterministic} if it holds for
all $s\in S$, $s\mayto[ k_1]_S s_1$, $s\mayto[ k_2]_S s_2$ for which
there is $k\in \Spec$ with $h_T( k, k_1)\ne \top_\LL$ and $h_T( k,
k_2)\ne \top_\LL$ that $k_1= k_2$ and $s_1= s_2$.

\subsection{Operations}

Any specification theory comes equipped with certain operations which
allow high-level reasoning~\cite{DBLP:conf/fase/BauerDHLLNW12}:
refinement, structural composition and quotient, and conjunction.  For
our quantitative framework, we add an operation of \emph{widening} which
allows to systematically relax specifications.

The \emph{modal refinement distance} $d_m: S\times T\to \Realnn\cup\{
\infty\}$ between the states of SMTS $S$, $T$ is defined using an
auxiliary function $h_m: S\times T\to \LL$, which in turn is defined to
be the least fixed point to the equations
\begin{equation*}
  h_m( s, t)= \max\left\{
    \begin{aligned}
      & \smashsup_{ s\,\mayto[ k]_S\, s'\,} \inf_{ \, t\,\mayto[
        \ell]_T\, t'} F( k, \ell, h_m( s', t')), \\
      & \smashsup_{ t\,\mustto[ \ell]_T\, t'\,} \inf_{ \, s\,\mustto[
        k]_S\, s'} F( k, \ell, h_m( s', t')).
    \end{aligned}
  \right.
\end{equation*}
We let $d_m= g\circ h_m$, using the function $g: \LL\to \Realnn\cup\{
\infty\}$ from above.  Also, $d_m(S, T) = d_m(s_0, t_0)$, and we write
$S\le_m^\alpha T$ if $d_m( S, T)\sqsubseteq_\LL \alpha$.  This
definition is an extension of the one of \emph{simulation distance}
in~\cite{DBLP:journals/corr/abs-1107-1205}, and the proof of existence of the
least fixed point is similar to the one
in~\cite{DBLP:journals/tcs/LarsenFT11}.  Note also that $d_m$ extends the
refinement relation $\le_m$ in the sense that $s\le_m t$ implies $d_m(
s, t)= 0$.

The \emph{thorough refinement distance} from an SMTS $S$ to an SMTS $T$
is
\begin{equation*}
  d_t( S, T)= \adjustlimits \sup_{ I\in \llbracket S\rrbracket} \inf_{
    J\in \llbracket T\rrbracket} d_m( I, J),
\end{equation*}
and we write $S\le_t^\alpha T$ if $d_t( S, T)\sqsubseteq_\LL \alpha$.
Again, $S\le_t T$ implies $d_t( S, T)= 0$.  It can be
shown~\cite{conf/csr/BauerFLT12} that both $d_m$ and $d_t$ obey triangle
inequalities in the sense that $d_m( S, T)+ d_m( T, U)\ge d_m( S, U)$
and $d_t( S, T)+ d_t( T, U)\ge d_t( S, U)$ for all SMTS $S$, $T$, $U$.
Also, $d_t( S, T)\le d_m( S, T)$ for all SMTS $S$, $T$, and $d_t( S, T)=
d_m( S, T)$ if $T$ is deterministic~\cite{conf/csr/BauerFLT12}.

To introduce \emph{structural composition} and \emph{quotient} of SMTS,
one needs corresponding operators on labels.  Let thus $\obar:
\Spec\times \Spec\hookrightarrow \Spec$ and $\obslash: \Spec\times
\Spec\to \Spec$ be partial label operators which satisfy the following
conditions:
\begin{itemize}
\item For all $k, \ell, k', \ell'\in \Spec$, if $h_T( k, \ell)\ne
  \top_\LL$ and $h_T( k', \ell')\ne \top_\LL$, then $k\obar k'$ is
  defined if and only if $\ell\obar \ell'$ is defined;
\item for all $k, \ell, m\in \Spec$, $\ell\obslash k$ is defined and
  $m\sqsubseteq_\Spec \ell\obslash k$ if and only if $k\obar m$ is
  defined and $k\obar m\sqsubseteq_\Spec \ell$;
\item for all $\ell, \ell'\in \Spec$, the following conditions are
  equivalent:
  \begin{itemize}
  \item there exists $k\in \Spec$ for which both $h_T( k, \ell)\ne
    \top_\LL$ and $d_T( k, \ell')\ne \top_\LL$;
  \item there exists $m\in \Spec$ for which both $\ell\obar m$ and
    $\ell'\obar m$ are defined;
  \item there exists $m\in \Spec$ for which both $m\obslash \ell$ and
    $m\obslash \ell'$ are defined.
  \end{itemize}
\end{itemize}

The \emph{structural composition} of SMTS $S$, $T$ is then the SMTS
$S\| T=( S\times T,( s_0, t_0), \mmayto_{ S\| T}, \mmustto_{ S\| T})$
with transitions defined as follows:
\begin{equation*}
  \dfrac{ s\mayto[ k]_S s' \qquad t\mayto[ \ell]_T t' \qquad k\obar \ell
    \text{ defined}} {( s, t)\mayto[ k\obar \ell]_{ S\| T}( s',
    t')}\qquad \dfrac{ s\mustto[ k]_S s' \qquad t\mustto[ \ell]_T t'
    \qquad k\obar \ell \text{ defined}} {( s, t)\mustto[ k\obar \ell]_{
      S\| T}( s', t')}
\end{equation*}

It can be shown~\cite{journals/mscs/BauerJLLS11} that for all SMTS $S$,
$S'$, $T$, $T'$, $S\le_m T$ and $S'\le_m T'$ imply $S\| S'\le_m T\| T'$.
For a quantitative generalization of this, we need a function $P:
\LL\times \LL\to \LL$ which permits to infer bounds on distances on
synchronized labels.  We assume that $P$ is monotone in both
coordinates, has $P( \bot_\LL, \bot_\LL)= \bot_\LL$, $P( \alpha,
\top_\LL)= P( \top_\LL, \alpha)= \top_\LL$ for all $\alpha\in \LL$, and
that
\begin{equation*}
  F( k\obar k', \ell\obar \ell', P( \alpha, \alpha'))\sqsubseteq_\LL P(
  F( k, \ell, \alpha), F( k', \ell', \alpha'))
\end{equation*}
for all $k, \ell, k', \ell'\in \Spec$ and $\alpha, \alpha'\in \LL$ for
which $k\obar k'$ and $\ell\obar \ell'$ are defined.  Then $P$ can be
used to bound distances between structural compositions: for SMTS $S$,
$T$, $S'$, $T'$, we have $h_m( S\| S', T\| T')\sqsubseteq_\LL P( h_m( S,
T), h_m( S', T'))$~\cite[Thm.~2]{conf/csr/BauerFLT12}.

For the definition of quotient, we first need to introduce
\emph{pruning}.  For a SMTS $S$ and a subset $B\subseteq S$ of states,
the pruning $\rho_B( S)$ is given as follows: Define a \must-predecessor
operator $\pre: 2^S\to 2^S$ by $\pre( S')=\{ s\in S\mid \exists k\in
\Spec, s'\in S': s\mustto[ k] s'\}$ and let $\pre^*$ be the reflexive,
transitive closure of $\pre$.  Then $\rho_B( S)$ exists if $s_0\notin
\pre^*( B)$, and in that case, $\rho_B( S)=( S_\rho, s_0, \mmayto_\rho,
\mmustto_\rho)$ with $S_\rho= S\setminus \pre^*( B)$, $\mmayto_\rho=
\mmayto\cap( S_\rho\times \Spec \times S_\rho)$, and $\mmustto_\rho=
\mmustto\cap( S_\rho\times \Spec \times S_\rho)$.

The \emph{quotient} of an SMTS $T$ by an SMTS $S$ is the SMTS $T\bbslash
S= \rho_B( T\times S\cup\{ u\},( t_0, s_0), \mmayto_{ T\bbslash S},
\mmustto_{ T\bbslash S})$ given as follows (if it exists):
\begin{gather*}
  \dfrac{%
    t\mayto[ \ell]_T t' \qquad s\mayto[ k]_S s' \qquad \ell\obslash k
    \text{ defined}}{%
    ( t, s)\mayto[ \ell\obslash k]_{ T\bbslash S}( t', s')} \qquad
  \dfrac{%
    t\mustto[ \ell]_T t' \qquad s\mustto[ k]_S s' \qquad \ell\obslash k
    \text{ defined}}{%
    ( t, s)\mustto[ \ell\obslash k]_{ T\bbslash S}( t', s')} \\
  \dfrac{%
    t\mustto[ \ell]_T t' \qquad \forall s\mustto[ k]_S s': \ell\obslash k
    \text{ undefined}}{%
    ( t, s)\in B} \\
  \dfrac{%
    m\in \Spec \qquad \forall s\mayto[ k]_S s': k\obar m \text{
      undefined}}{%
    ( t, s)\mayto[ m]_{ T\bbslash S} u} \qquad \dfrac{%
    m\in \Spec}{%
    u\mayto[ m]_{ T\bbslash S} u}
\end{gather*}
Note the extra universal state $u$ which is introduced here.  The
standard property of quotient is as
follows~\cite{journals/mscs/BauerJLLS11}: For SMTS $S$, $T$, $X$, for
which $S$ is deterministic and $T\bbslash S$ exists, $X\le_m T\bbslash
S$ if and only if $S\| X\le_m T$.  Note that this property implies
\emph{uniqueness} (up to $\equiv_m$) of
quotient~\cite{conf/models/FahrenbergLW11}; hence if quotient exists, it
must be defined as above.

For quantitative properties of quotient, we must again look to
properties of the label operator $\obslash$ which can ensure them.  We
say that $\obslash$ is \emph{quantitatively well-behaved} if it holds
for all $k, \ell, m\in \Spec$ that $\ell\obslash k$ is defined and $h_T(
m, \ell\obslash k)\ne \top_\LL$ if and only if $k\obar m$ is defined and
$d_T( k\obar m, \ell)\ne \top_\LL$, and in that case, $F( m,
\ell\obslash k, \alpha)\sqsupseteq_\LL F( k\obar m, \ell, \alpha)$ for
all $\alpha\in \LL$.  For such a quantitatively well-behaved $\obslash$
it can be shown~\cite[Thm.~3]{conf/csr/BauerFLT12} that for all SMTS
$S$, $T$, $X$ such that $S$ is deterministic and $T\bbslash S$ exists,
$h_m( X, T\bbslash S)\sqsupseteq_\LL h_m( S\| X, T)$.

For \emph{conjunction} of SMTS, we need a partial label operator
$\owedge: \Spec\times \Spec\to \Spec$ for which it holds that
\begin{itemize}
\item for all $k, \ell\in \Spec$, if $k\owedge \ell$ is defined, then
  $k\owedge \ell\sqsubseteq_\Spec k$ and $k\owedge \ell\sqsubseteq_\Spec
  \ell$
\item for all $k, \ell, m\in \Spec$ for which $m\sqsubseteq_\Spec k$ and
  $m\sqsubseteq_\Spec \ell$, $k\owedge \ell$ is defined and
  $m\sqsubseteq_\Spec k\owedge \ell$, and
\item for all $\ell, \ell'\in \Spec$, there exists $k\in \Spec$ for
  which $h_T( k, \ell)\ne \top_\LL$ and $h_T( k, \ell')\ne \top_\LL$ if
  and only if there exists $m\in \Spec$ for which $\ell\owedge m$ and
  $\ell'\owedge m$ are defined.
\end{itemize}

The conjunction of two SMTS $S$, $T$ is the SMTS $S\wedge T=\rho_B(
S\times T,( s_0, t_0), \mmayto_{ S\wedge T}, \mmustto_{ S\wedge T})$
given as follows:
\begin{gather*}
  \dfrac{%
    s\mustto[ k]_S s'\qquad t\mayto[ \ell]_T t'\qquad k\owedge \ell\text{
      defined}%
  }{%
    ( s, t)\mustto[ k\owedge \ell]_{ S\wedge T}( s', t')%
  } \qquad%
  \dfrac{%
    s\mayto[ k]_S s'\qquad t\mustto[ \ell]_T t'\qquad k\owedge \ell\text{
      defined}%
  }{%
    ( s, t)\mustto[ k\owedge \ell]_{ S\wedge T}( s', t')%
  } \\%
  \dfrac{%
    s\mayto[ k]_S s'\qquad t\mayto[ \ell]_T t'\qquad k\owedge \ell\text{
      defined}%
  }{%
    ( s, t)\mayto[ k\owedge \ell]_{ S\wedge T}( s', t')%
  } \\%
  \dfrac{%
    s\mustto[ k]_S s'\qquad \forall t\mayto[ \ell]_T t': k\owedge
    \ell\text{ undefined}%
  }{%
    ( s, t)\in B} \qquad%
  \dfrac{%
    t\mustto[ \ell]_T t'\qquad \forall s\mayto[ k]_S s': k\owedge
    \ell\text{ undefined}%
  }{%
    ( s, t)\in B}
\end{gather*}

With this definition, it can be shown~\cite{journals/mscs/BauerJLLS11}
that conjunction acts as \emph{greatest lower bound}: Given SMTS $S$,
$T$ for which $S\wedge T$ is defined, we have $S\wedge T\le_m S$ and
$S\wedge T\le_m T$, and if $S$ or $T$ is deterministic and $U$ is a SMTS
for which $U\le_m S$ and $U\le_m T$, then $S\wedge T$ is defined and
$U\le_m S\wedge T$.  We again note that this property implies
uniqueness, up to $\equiv_m$, of conjunction: if conjunction exists, it
must be given as above.

To generalize this to a quantitative greatest lower bound property, we
shall have reason to consider two different properties of the label
operator $\owedge$.  The first is analogous to the one for structural
composition above: we say that $\owedge$ is \emph{bounded} by a function
$C: \LL\times \LL\to \LL$ if $C$ is monotone in both coordinates, has
$C( \bot_\LL, \bot_\LL)= \bot_\LL$, $C( \alpha, \top_\LL)= C( \top_\LL,
\alpha)= \top_\LL$ for all $\alpha\in \LL$, and if it holds for all $k,
\ell, m\in \Spec$ for which $d_T( m, k)\ne \infty$ and $d_T( m,
\ell)\ne \infty$ that $k\owedge \ell$ is defined and
\begin{equation*}
  F( m, k\owedge \ell, C( \alpha, \alpha'))\sqsubseteq_\LL C( F( m, k,
  \alpha), F( m, \ell, \alpha'))
\end{equation*}
for all $\alpha, \alpha'\in \LL$.  For such a bounded $\owedge$ it can
be shown~\cite{conf/csr/BauerFLT12} that if $S$, $T$, $U$ are SMTS of
which $S$ or $T$ is deterministic, and if $h_m( U, S)\ne \top_\LL$ and
$h_m( U, T)\ne \top_\LL$, then $S\wedge T$ is defined and $h_m( U,
S\wedge T)\sqsubseteq_\LL C( h_m( U, S), h_m( U, T))$.

For the second, \emph{relaxed} boundedness property of $\owedge$, we
have to first introduce a notion of \emph{quantitative widening}.  For
$\alpha\in \LL$ and SMTS $S$, $T$, we say that $T$ is an
\emph{$\alpha$-widening} of $S$ if there is a relation $R\subseteq
S\times T$ for which $( s_0, t_0)\in R$ and such that for all $( s,
t)\in R$, $s\mayto[ k]_S s'$ if and only if $t\mayto[ \ell]_T t'$, and
$s\mustto[ k]_S s'$ if and only if $t\mustto[ \ell]_T t'$, for
$k\sqsubseteq_\Spec \ell$, $d( \ell, k)\sqsubseteq_\LL \alpha$, and $(
s', t')\in R$.  Thus up to unweighted two-sided refinement, $T$ is the
same as $S$, but transition labels in $T$ can be $\alpha$ ``wider'' than
in $S$.  (Hence also $S\le_m T$, but nothing general can be said about
quantitative refinement from $T$ to $S$,
\cf~\cite{conf/csr/BauerFLT12}.)

We say that the operator $\owedge$ is \emph{relaxed bounded} by a
function family $C=\{ C_{ \beta, \gamma}: \LL\times \LL\to \LL\mid
\beta, \gamma\in \LL\}$ if all $C_{ \beta, \gamma}$ are monotone in both
coordinates, have $C_{ \beta, \gamma}( \bot_\LL, \bot_\LL)= \bot_\LL$,
$C_{ \beta, \gamma}( \alpha, \top_\LL)= C_{ \beta, \gamma}( \top_\LL,
\alpha)= \top_\LL$ for all $\alpha\in \LL$, and if it holds for all $k,
\ell\in \Spec$ for which there is $m\in \Spec$ with $h_T( m, k)\ne
\top_\LL$ and $h_T( m, \ell)\ne \top_\LL$ that there exist $k', \ell'\in
\Spec$ with $k\sqsubseteq_\Spec k'$, $\ell\sqsubseteq_\Spec \ell'$,
$h_T( k', k)= \beta\ne \top_\LL$, and $h_T( \ell', \ell)= \gamma\ne
\top_\LL$, such that $k'\owedge \ell'$ is defined, and then for all
$m\in \Spec$ with $h_T( m, k)\ne \top_\LL$ and $d_T( m, \ell)\ne
\top_\LL$,
\begin{equation*}
  F( m, k'\owedge \ell', C_{ \beta, \gamma}( \alpha,
  \alpha'))\sqsubseteq_\LL C_{ \beta, \gamma}( F( m, k, \alpha), F( m,
  \ell, \alpha'))
\end{equation*}
for all $\alpha, \alpha'\in \LL$.  The following property can then be
shown~\cite[Thm.~5]{conf/csr/BauerFLT12}: Let $S$, $T$ be SMTS with $S$ or $T$
deterministic.  If there is an SMTS $U$ for which $h_m( U,
S)\ne\top_\LL$ and $h_m( U, T)\ne \top_\LL$, then there exist $\beta$-
and $\gamma$-widenings $S'$ of $S$ and $T'$ of $T$ for which $S'\wedge
T'$ is defined, and such that $h_m( U, S'\wedge T')\sqsubseteq_\LL C_{
  \beta, \gamma}( h_m( U, S), h_m( U, T))$ for all SMTS $U$ for which
$h_m( U, S)\ne \top_\LL$ and $h_m( U, T)\ne \top_\LL$.

\section{Robust Semantics of Modal Event-Clock Specifications}
\label{se:mecs}

As an application of the framework laid out in this paper, we consider
the modal event-clock specifications (MECS)
of~\cite{DBLP:conf/icfem/BertrandLPR09} and give them a robust semantics
as SMTS.  We choose MECS instead of a more expressive real-time
formalism such as \eg~timed automata~\cite{DBLP:journals/tcs/AlurD94}
mainly for ease of exposition; it is certainly possible to extend the
work presented here also to these formalisms.

We assume a fixed finite alphabet $\Sigma$ and let $\delta\notin \Sigma$
denote a special symbol which signifies passage of time.  Let $\Phi(
\Sigma)$ denote the set of closed clock constraints over $\Sigma$, given by
\begin{equation*}
  \Phi( \Sigma) \ni \phi ::= a\le k\mid a\ge k\mid \phi_1\wedge \phi_2
  \qquad( a\in \Sigma, k\in \Nat, \phi_1, \phi_2\in \Phi( \Sigma))\,.
\end{equation*}
A (real) clock valuation is a mapping $u: \Sigma\to \Realnn$; we say
that $u\models \phi$, for $\phi\in \Phi( \Sigma)$, if $u( a)$
satisfies $\phi$ for all $a\in \Sigma$, and we let $\llbracket
\phi\rrbracket=\{ u: \Sigma\to \Realnn\mid u\models \phi\}$.  For
$d\in \Realnn$ and $b\in \Sigma$ we define the valuations $u+
d= \lambda a.( u( a)+ d)$ and $u[ b]= \lambda a.( \text{if }
a= b\text{ then } 0\text{ else } u( a))$.  Note that for brevity we use
lambda notation for anonymous functions here.

We denote by $\I=\{[ x, y]\mid x\in \Realnn, y\in \Realnn\cup\{
\infty\}, x\le y\}$ the set of closed extended non-negative real
intervals, and define addition of intervals by $[ l, r]+[ l', r']=[ l+
l', r+ r']$.  An \emph{interval clock valuation} is a mapping $v:
\Sigma\to \I$ associating with each symbol $a$ a non-negative interval
$v( a)=[ l_a, r_a]\in \I$ of possible clock values.  We say that
$v\models \phi$, for $\phi\in \Phi( \Sigma)$, if there exists $u:
\Sigma\to \Realnn$ for which $u( a)\in v( a)$ for all $a\in \Sigma$ and
$u\models \phi$.  For $d\in \I$ and $b\in \Sigma$ we define $v+ d=
\lambda a.( v( a)+[ d, d])$ and $u[ b]= \lambda a.( \text{if } a=
b\text{ then }[ 0, 0]\text{ else } u( a))$.

A \emph{modal event-clock specification}
(MECS)~\cite{DBLP:conf/icfem/BertrandLPR09} is a tuple $A=( Q, q_0,
\mmayto_A, \mmustto_A)$ consisting of a finite set $Q$ of locations,
with initial location $q_0\in Q$, and \may and \must edges $\mmayto_A,
\mmustto_A\subseteq Q\times \Sigma\times \Phi( \Sigma)\times Q$ which
satisfy that for all $( q, a, g, q')\in \mmustto_A$ there exists $( q,
a, g', q')\in \mmayto_A$ with $\llbracket g\rrbracket\subseteq
\llbracket g'\rrbracket$.  As before we write $q\mayto[ a, g]_A q'$
instead of $( q, a, g, q')\in \mmayto_A$, similarly for $\mmustto_A$.
Figure~\ref{fi:mecs} shows some examples of MECS.

To facilitate robust analysis of MECS, we give their semantics not as
usual timed transition systems~\cite{DBLP:journals/tcs/AlurD94} (or as
modal region automata as in~\cite{DBLP:conf/icfem/BertrandLPR09}), but
as \emph{interval timed modal transition systems} (ITMTS).  These are
SMTS over
\begin{equation*}
  \Spec=( \Sigma\times\{[ 0, 0]\})\cup(\{ \delta\}\times \I)\subseteq(
  \Sigma\cup\{ \delta\})\times \I,
\end{equation*}
with $( a,[ l, r])\sqsubseteq_\Spec( a',[ l', r'])$ if and only if $a=
a'$, $l\ge l'$, and $r\le r'$ (hence $[ l, r]\subseteq[ l', r']$), and
thus with $\Imp= \Sigma\times\{ 0\}\cup\{ \delta\}\times \Realnn$.
Hence an implementation is a usual timed transition system, with
discrete transitions $s\mustto[ a, 0] s'$ and delay transitions
$s\mustto[ \delta, d] s'$.

The \emph{semantics} of a MECS $A=( Q, q_0, \mmayto_A, \mmustto_A)$ is
the ITMTS $\llparenthesis A\rrparenthesis=( S, s_0, \mayto_S,
\mustto_S)$ given as follows:
\begin{align*}
  S &= \{( q, v)\mid q\in Q, v: \Sigma\to \I\} \qquad s_0=( q_0,
  \lambda x. 0) \\
  \mmustto_S &= \{( q, v)\mustto[ a, 0]_S( q', v')\mid q\mustto[ a,
  g]_A q', v\models g, v'= v[ a]\} %\\
  %&\;
 \cup\{( q, v)\mustto[ \delta,{[ l, r]}]_S( q, v')\mid v'= v+[ l,
  r]\} \\
  \mmayto_S &= \{( q, v)\mayto[ a, 0]_S( q', v')\mid q\mayto[ a, g]_A
  q', v\models g, v'= v[ a]\} %\\
  %&\;
 \cup\{( q, v)\mayto[ \delta,{[ l, r]}]_S( q, v')\mid v'= v+[ l,
  r]\}
\end{align*}

Note that the ``real'', precise semantics of $A$ as a timed transition
system~\cite{DBLP:journals/tcs/AlurD94} is an implementation of
$\llparenthesis A\rrparenthesis$, also any of the ``relaxed'' or
``robust'' semantics
of~\cite{DBLP:conf/concur/BouyerLMST11,DBLP:conf/hybrid/GuptaHJ97,DBLP:conf/time/SwaminathanF07,DBLP:conf/ifipTCS/SwaminathanFK08}
are implementations of $\llparenthesis A\rrparenthesis$; any robust
semantics ``lives'' in our framework.  As we are using closed clock
constraints for MECS, $\llparenthesis A\rrparenthesis$ as defined above
is compactly branching.

Refinement of MECS is defined semantically: $A\le_m B$ if
$\llparenthesis A\rrparenthesis\le_m \llparenthesis B\rrparenthesis$.
Note that the refinement of~\cite{DBLP:conf/icfem/BertrandLPR09} is
different (indeed it is not quantitative in our sense).  By definition
of modal refinement, a specification $S\le_m \llparenthesis
A\rrparenthesis$ is a \emph{more precise}, or less relaxed,
specification of the semantics of $A$: any delay intervals on
transitions $s\mayto[ \delta,{[ l, r]}]_S s'$ are contained in intervals
$t\mayto[ \delta,{[ l', r']}]_{ \llparenthesis A\rrparenthesis} t'$ (and
similarly for \must transitions).

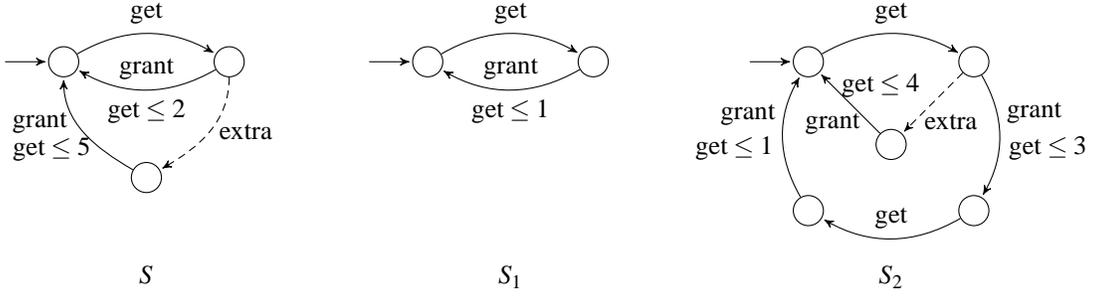
\begin{figure}[tp]
  \centering
  \begin{tikzpicture}[->,>=stealth',shorten >=1pt,auto,node
    distance=2.0cm,initial text=,scale=1.1]
    \tikzstyle{every node}=[font=\small] 
    \tikzstyle{every state}=[fill=white,shape=circle,inner
    sep=.5mm,minimum size=4mm]
    \begin{scope}
      \node[state,initial] (1) at (0,0) {};
      \node[state] (2) at (2,0) {};
      \node[state] (3) at (1,-1.4) {};
      \node at (1,-2.6) {$S_{ \vphantom{1}}$};
      \path (1) edge [out=30,in=150] node [above] {get} (2);
      \path (2) edge [out=210,in=-30] node [above] {grant} node [below]
      {$\text{get}\le 2$} (1);
      \path (2) edge [densely dashed,out=270,in=30] node [right] {extra}
      (3);
      \path (3) edge [out=150,in=270] node [left,pos=.6] {grant}
      node[left,pos=.3] {$\text{get}\le 5$} (1);
    \end{scope}
    \begin{scope}[xshift=4.4cm]
      \node[state,initial] (1) at (0,0) {};
      \node[state] (2) at (2,0) {};
      \node at (1,-2.6) {$S_1$};
      \path (1) edge [out=30,in=150] node [above] {get} (2);
      \path (2) edge [out=210,in=-30] node [above] {grant} node [below]
      {$\text{get}\le 1$} (1);
    \end{scope}
    \begin{scope}[xshift=9cm]
      \node[state,initial] (1) at (0,0) {};
      \node[state] (2) at (2,0) {};
      \node[state] (3) at (1,-1) {};
      \node[state] (1') at (2,-1.8) {};
      \node[state] (2') at (0,-1.8) {};
      \node at (1,-2.6) {$S_2$};
      \path (1) edge [out=30,in=150] node [above] {get} (2);
      \path (2) edge [densely dashed] node [right,pos=.8] {extra} (3);
      \path (3) edge node [left,pos=.17] {grant} node [right,pos=.8]
      {$\text{get}\le 4$} (1);
      \path (2) edge [out=-60,in=60] node [right,pos=.3] {grant}
      node [right,pos=.6] {$\text{get}\le 3$} (1');
      \path (1') edge [out=210,in=330] node [above] {get} (2');
      \path (2') edge [out=120,in=240] node[left,pos=.68] {grant} node
      [left,pos=.4] {$\text{get}\le 1$} (1);
    \end{scope}
  \end{tikzpicture}
  \caption{%
    \label{fi:mecs}
    An MECS model $S$ of a resource specification,
    \cf~\cite{DBLP:conf/icfem/BertrandLPR09}, and two refinement
    candidates $S_1$, $S_2$.  As customary, we omit \may-transitions
    which have an underlying \must-transition with the same label.  Note
    that $S_1\le_m S$ and $S_2\not\le_m S$, but $d_m( S_2, S)= 1$.}
\end{figure}

We are interested in \emph{timing differences} of (refinements of) MECS,
\ie~in expressing how much two ITMTS can differ in the timings of their
behaviors.  Given two finite traces $\sigma=( a_0, x_0),\dots,( a_n,
x_n)$ and $\sigma'=( a_0, x_0'),\dots,( a_n, x_n')$ (note that the
discrete labels in $\Sigma\cup\{ \delta\}$ are the same), their timing
difference is $|( x_0+ x_1+\dots+ x_n)-( x_0'+ x_1'+\dots+ x_n')|$, and
what interests us is the \emph{maximal} timing difference at any point
of the runs.  Hence we want the distance between $\sigma$ and $\sigma'$
to be $\max_{ m= 0,\dots, n}| \sum_{ i= 0}^m x_i- \sum_{ i= 0}^m x_i'|$,
and with the $\max_{ m= 0,\dots, n}$ replaced by $\sup_{ m\in \Nat}$ for
infinite traces.  This is precisely the \emph{maximum-lead distance}
of~\cite{DBLP:conf/formats/HenzingerMP05,journals/jlap/ThraneFL10}, and
we show below how it fits in the framework of this paper.

Note that the accumulating distance
of~\cite{DBLP:conf/mfcs/BauerFJLLT11} measures something entirely
different: for the finite traces above, it is $| x_0- x_0'|+ \discount|
x_1- x_1'|+\dots+ \discount^n| x_n- x_n'|$, hence measuring the sum of
the differences in the individual timings of transitions rather than the
overall timing difference.  Thus the work laid out
in~\cite{DBLP:conf/mfcs/BauerFJLLT11} is not applicable to our setting,
showing the strength of the more general approach
of~\cite{conf/csr/BauerFLT12}.

Let $\LL=( \Realnn\cup\{ \infty\})^\Real$, the set of mappings from
\emph{leads} to distances, define $F: \Imp\times \Imp\times \LL\to \LL$ by
\begin{equation*}
  F(( a, t),( a', t'), \alpha)=
  \begin{cases}[c]
    \top_\LL &\text{if } a\ne a'\,, \\
    \lambda d. \max(| d+ t- t'|, \alpha( d+ t- t'))
    &\text{if } a= a'\,
  \end{cases}
\end{equation*}
and extend $F$ to specifications by $F( k, \ell, \alpha)= \sup_{ m\in
  \llbracket k\rrbracket} \inf_{ n\in \llbracket \ell\rrbracket} F( m,
n, \alpha)$.  Define $g: \LL\to \Realnn\cup\{ \infty\}$ by $g( \alpha)=
\alpha( 0)$; the maximum-lead distance assuming the lead is zero.  Using
our characterization of $h_T$ from~\eqref{eq:trdist}, it can then be
shown that $d_T= g\circ h_T: \Specseq\times \Specseq\to \Realnn\cup\{
\infty\}$ is precisely the maximum-lead distance,
\cf~\cite{DBLP:conf/formats/HenzingerMP05,DBLP:journals/corr/abs-1107-1205}.  We
also instantiate our definitions of modal and thorough refinement
distance for ITMTS; for MECS $A$, $B$ we let $d_m( A, B)=
d_m(\llparenthesis A\rrparenthesis, \llparenthesis B\rrparenthesis)$,
$d_t( A, B)= d_t(\llparenthesis A\rrparenthesis, \llparenthesis
B\rrparenthesis)$.

\emph{Determinism} for ITMTS is the same as
in~\cite{DBLP:conf/mfcs/BauerFJLLT11}: if $k_1, k_2\in \Spec$, with
$k_1=( a_1,[ l_1, r_1])$, $k_2=( a_2,[ l_2, r_2])$, then there is $k\in
\Spec$ with $h_T( k, k_1)\ne \top_\LL$ and $h_T( k, k_2)\ne \top_\LL$ if and
only if $a_1= a_2$.  Hence an ITMTS $S$ is deterministic if and only if
it holds for all $s\in S$ that $s\mayto[{( a,[ l_1, r_1])}]_S s_1$ and
$s\mayto[{( a,[ l_2, r_2])}]_S s_2$ imply $[ l_1, r_1]=[ l_2, r_2]$ and
$s_1= s_2$.  For an MECS $A$, $\llparenthesis A\rrparenthesis$ is hence
deterministic if and only if for all locations $q$, $q\mayto[ a, g_1]
q_1$ and $q\mayto[ a, g_2] q_2$ imply that $\llbracket g_1\rrbracket=
\llbracket g_2\rrbracket$ and $q_1= q_2$.  This is a stronger notion of
determinism than in~\cite{DBLP:conf/icfem/BertrandLPR09}; we will call
it \emph{strong determinism} for differentiation.

For \emph{structural composition} of ITMTS we use CSP-style
synchronization on discrete labels and intersection of intervals.  Note
that this is different from~\cite{DBLP:conf/mfcs/BauerFJLLT11} which
instead uses addition of intervals.  Given $( a,[ l, r]),( a',[ l',
r'])\in \Spec$ we hence define
\begin{equation*}
  ( a,[ l, r])\obar( a',[ l', r'])=
  \begin{cases}
    ( a,[ \max( l, l'), \min( r, r')])  %\hspace*{-5em} & \\
    &\text{if } a= a' \text{ and }
    \max( l, l')\le \min(
    r, r')\,, \\
    \text{undefined} &\text{otherwise}\,.
  \end{cases}
\end{equation*}

It can be shown that $\obar$ is bounded by $P( \alpha, \alpha')= \max(
\alpha, \alpha')$.  Also, the notion of structural composition of ITMTS
we obtain is consistent with the one of synchronized product
of~\cite{DBLP:conf/icfem/BertrandLPR09} (denoted $\otimes$ in that
paper).  Figure~\ref{fi:struct} depicts some examples of structural
compositions.

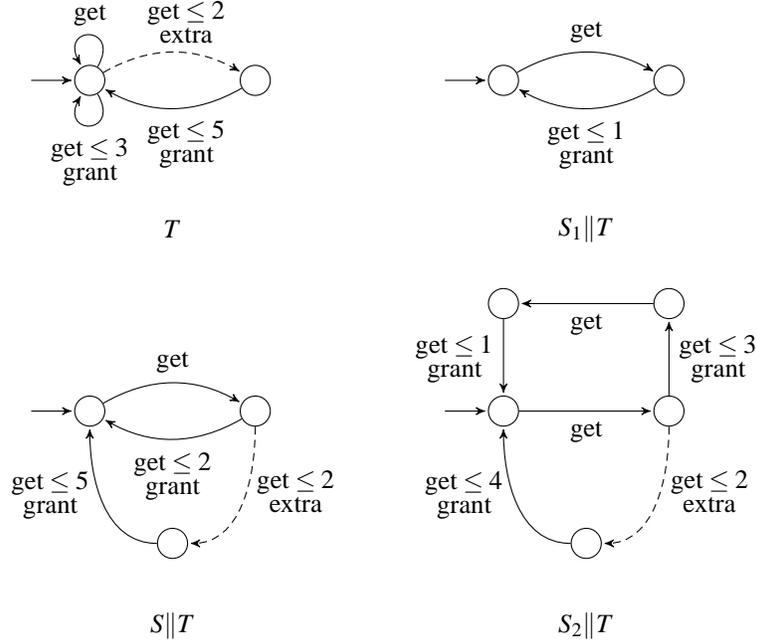
\begin{figure}[tp]
  \centering
  \begin{tikzpicture}[->,>=stealth',shorten >=1pt,auto,node
    distance=2.0cm,initial text=,scale=1.1]
    \tikzstyle{every node}=[font=\small] 
    \tikzstyle{every state}=[fill=white,shape=circle,inner
    sep=.5mm,minimum size=4mm]
    \begin{scope}
      \node[state,initial] (1) at (0,0) {};
      \node[state] (2) at (2,0) {};
      \node at (1,-1.8) {$T$};
      \path (1) edge [out=60,in=120,loop] node [above] {get} (1);
      \path (1) edge [out=-60,in=-120,loop] node [below]
      {$\stackrel{ \textstyle \text{get}\le 3}{ \text{grant}}$} (1);
      \path (1) edge [densely dashed,out=30,in=150] node [above,pos=.6]
      {$\stackrel{ \textstyle \text{get}\le 2}{ \text{extra}}$} (2);
      \path (2) edge [out=210,in=-30] node [below,pos=.4] {$\stackrel{
          \textstyle \text{get}\le 5}{ \text{grant}}$} (1);
    \end{scope}
    \begin{scope}[xshift=5cm]
      \node [state,initial] (1) at (0,0) {};
      \node [state] (2) at (2,0) {};
      \node at (1,-1.8) {$S_1\| T$};
      \path (1) edge [out=30,in=150] node [above] {get} (2);
      \path (2) edge [out=210,in=-30] node [below] {$\stackrel{
          \textstyle \text{get}\le 1}{ \text{grant}}$} (1);
    \end{scope}
    \begin{scope}[yshift=-4cm]
      \node [state,initial] (1) at (0,0) {};
      \node [state] (2) at (2,0) {};
      \node [state] (3) at (1,-1.6) {};
      \node at (1,-2.6) {$S\| T$};
      \path (1) edge [out=30,in=150] node [above] {get} (2);
      \path (2) edge [out=210,in=-30] node [below] {$\stackrel{
          \textstyle \text{get}\le 2}{ \text{grant}}$} (1);
      \path (2) edge [densely dashed,out=270,in=0] node [right,pos=.4]
      {$\stackrel{ \textstyle \text{get}\le 2}{ \text{extra}}$} (3);
      \path (3) edge [out=180,in=270] node [left,pos=.58] {$\stackrel{
          \textstyle \text{get}\le 5}{ \text{grant}}$} (1);
    \end{scope}
    \begin{scope}[yshift=-4cm,xshift=5cm]
      \node [state,initial] (1) at (0,0) {};
      \node [state] (2) at (2,0) {};
      \node [state] (3) at (2,1.3) {};
      \node [state] (4) at (0,1.3) {};
      \node [state] (5) at (1,-1.6) {};
      \node at (1,-2.6) {$S_2\| T$};
      \path (1) edge node [below] {get} (2);
      \path (2) edge node [right] {$\stackrel{ \textstyle \text{get}\le
          3}{ \text{grant}}$} (3);
      \path (3) edge node [below] {get} (4);
      \path (4) edge node [left] {$\stackrel{ \textstyle \text{get}\le
          1}{ \text{grant}}$} (1);
      \path (2) edge [densely dashed,out=270,in=0] node [right,pos=.4]
      {$\stackrel{ \textstyle \text{get}\le 2}{ \text{extra}}$} (5);
      \path (5) edge [out=180,in=270] node [left,pos=.58] {$\stackrel{
          \textstyle \text{get}\le 4}{ \text{grant}}$} (1);
    \end{scope}
  \end{tikzpicture}
  \caption{%
    \label{fi:struct}
    A MECS model $T$ of a process accessing the resource $S$ from
    Fig.~\ref{fi:mecs}, together with the structural compositions $S\|
    T$, $S_1\| T$, and $S_2\| T$.  Note that $d_m( S_2\| T, S\| T)= 1$.}
\end{figure}

\begin{theorem}
  Let $A$, $B$, $A'$, $B'$ be MECS.  With $\|$ the notion of
  synchronized product of MECS
  from~\cite{DBLP:conf/icfem/BertrandLPR09}, $\llparenthesis A\|
  B\rrparenthesis\equiv_m \llparenthesis A\rrparenthesis\|\llparenthesis
  B\rrparenthesis$.  Additionally, $d_m( A\| A', B\| B')\le \max( d_m(A,
  B), d_m( A', B'))$.
\end{theorem}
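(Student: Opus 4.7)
The plan is to handle the two assertions in sequence. For the modal equivalence, I would exhibit a two-sided modal refinement between $\llparenthesis A\|B\rrparenthesis$ and $\llparenthesis A\rrparenthesis\|\llparenthesis B\rrparenthesis$ by means of an explicit diagonal state-relation. A state of the former has the shape $((q_A,q_B),v)$ with a single interval clock valuation $v:\Sigma\to\I$, whereas a state of the latter has the shape $((q_A,v_A),(q_B,v_B))$ with two valuations. Since the two MECS share $\Sigma$ and both semantic valuations evolve synchronously -- they start equal at $\lambda x.0$, a reset on $a\in\Sigma$ resets $a$ in both copies, and a delay $[l,r]$ advances every coordinate identically -- the relation $R=\{(((q_A,q_B),v),((q_A,v),(q_B,v)))\}$ is preserved. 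The required transition correspondence then follows from the identities $(a,[0,0])\obar(a,[0,0])=(a,[0,0])$ and $(\delta,[l,r])\obar(\delta,[l,r])=(\delta,[l,r])$, together with the fact that a synchronized action in $A\|B$ carries the guard $g_A\wedge g_B$, which a valuation $v$ satisfies precisely when $v\models g_A$ and $v\models g_B$. Running the argument with $R$ in each direction yields $\llparenthesis A\|B\rrparenthesis\equiv_m\llparenthesis A\rrparenthesis\|\llparenthesis B\rrparenthesis$.

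For the quantitative inequality, I would feed the already-established boundedness of $\obar$ by $P(\alpha,\alpha')=\max(\alpha,\alpha')$ into Thm.~2 of~\cite{conf/csr/BauerFLT12}, applied to the four ITMTS $\llparenthesis A\rrparenthesis$, $\llparenthesis A'\rrparenthesis$, $\llparenthesis B\rrparenthesis$, $\llparenthesis B'\rrparenthesis$, obtaining
\begin{equation*}
  h_m\bigl(\llparenthesis A\rrparenthesis\|\llparenthesis A'\rrparenthesis,\,
  \llparenthesis B\rrparenthesis\|\llparenthesis B'\rrparenthesis\bigr)
  \sqsubseteq_\LL
  \max\bigl(h_m(\llparenthesis A\rrparenthesis,\llparenthesis B\rrparenthesis),\,
  h_m(\llparenthesis A'\rrparenthesis,\llparenthesis B'\rrparenthesis)\bigr).
\end{equation*}
Since $g(\alpha)=\alpha(0)$ commutes with pointwise maxima, applying it to both sides converts this into $d_m(\llparenthesis A\rrparenthesis\|\llparenthesis A'\rrparenthesis,\llparenthesis B\rrparenthesis\|\llparenthesis B'\rrparenthesis)\le\max(d_m(A,B),d_m(A',B'))$. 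Substituting the modal equivalences from Part~1 on the left-hand side -- which is legitimate because $\equiv_m$ implies $d_m=0$ in both directions, and the triangle inequality for $d_m$ therefore forces it to be preserved by $\equiv_m$ -- gives the desired $d_m(A\|A',B\|B')\le\max(d_m(A,B),d_m(A',B'))$.

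The main subtlety lies in Part~1: one has to verify that the synchronized product of MECS of~\cite{DBLP:conf/icfem/BertrandLPR09} produces exactly the joint transitions that our CSP-style intersection-based $\obar$ on ITMTS labels produces. Assuming the standard full synchronization over the shared $\Sigma$, with $\delta$-labelled delays being intersected, the correspondence is immediate; if~\cite{DBLP:conf/icfem/BertrandLPR09} additionally permits independent firing of non-shared symbols, one first restricts attention to the common alphabet. The remaining verification is then a routine case split on reset versus delay transitions and \may versus \must modality, with the guard-conjunction identity cited above being the only non-trivial observation.
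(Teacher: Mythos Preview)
Your overall strategy matches the paper's proof exactly: the paper dismisses Part~1 as ``clear from the definitions'' and derives Part~2 from Thm.~2 of~\cite{conf/csr/BauerFLT12} together with the fact that $g(\alpha)=\alpha(0)$ is a lattice homomorphism. Your additional observation that the transfer from $\llparenthesis A\rrparenthesis\|\llparenthesis A'\rrparenthesis$ to $\llparenthesis A\|A'\rrparenthesis$ goes through the triangle inequality for $d_m$ is a detail the paper suppresses but clearly uses.

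There is, however, a gap in the explicit relation you give for Part~1. In the ITMTS composition $\llparenthesis A\rrparenthesis\|\llparenthesis B\rrparenthesis$, a delay transition arises from \emph{any} pair of component delays $(\delta,[l,r])$ and $(\delta,[l',r'])$ with non-empty intersection, and its target is $((q_A,v+[l,r]),(q_B,v+[l',r']))$ --- the two valuations need not remain equal. Your diagonal relation $R$ only contains pairs with equal valuations, so when checking $\llparenthesis A\|B\rrparenthesis\le_m\llparenthesis A\rrparenthesis\|\llparenthesis B\rrparenthesis$, the \must-matching clause fails: a \must-delay on the right with $[l,r]\ne[l',r']$ lands outside the image of $R$. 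Symmetrically, the \may-matching clause fails for the reverse refinement. Your sentence ``a delay $[l,r]$ advances every coordinate identically'' is correct for the transitions you \emph{choose} to match, but the refinement conditions also quantify universally over transitions on the other side. A relation that works must allow the two component valuations to drift apart (e.g.\ relate $((q_A,q_B),v)$ to $((q_A,v_A),(q_B,v_B))$ under a suitable containment between $v$, $v_A$, $v_B$); the verification then also needs care with how guard satisfaction interacts with interval containment. The paper's one-line ``clear from the definitions'' hides all of this, so your plan is not wrong in spirit --- only the specific witness relation needs to be enlarged.
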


\begin{proof}
  $\llparenthesis A\| B\rrparenthesis\equiv_m \llparenthesis
  A\rrparenthesis\|\llparenthesis B\rrparenthesis$ is clear from the
  definitions.  For the second part, we have $h_m( A\| A', B\|
  B')\sqsubseteq_\LL P( h_m( A, B), h_m( A', B'))= \max( h_m( A, B),
  h_m( A', B'))$ by~\cite[Thm.~2]{conf/csr/BauerFLT12}, and as $g:
  \LL\to \Realnn\cup\{ \infty\}$ is a homomorphism, the claim follows.
\end{proof}

For \emph{quotient} of ITMTS we define, for labels $( a,[ l, r]),( a',[
l', r'])\in \Spec$,
\begin{equation*}
  ( a',[ l', r'])\obslash( a,[ l, r])=
  \begin{cases}
    \text{undefined} &\text{if } a\ne a'\,, \\
    ( a,[ l', \infty]) &\text{if } a= a'\text{ and } l< l'\le r\le r'\,,
    \\
    ( a,[ l', r']) &\text{if } a= a'\text{ and } l< l'\le r'< r\,, \\
    \text{undefined} &\text{if } a= a'\text{ and } l\le r< l'\le r'\,, \\
    ( a,[ 0, \infty]) &\text{if } a= a'\text{ and } l'\le l\le
    r\le r'\,, \\
    ( a,[ 0, r']) &\text{if } a= a'\text{ and } l'\le l\le r< r'\,,
    \\
    \text{undefined} &\text{if } a= a'\text{ and } l'\le r'< l\le r\,.
  \end{cases}
\end{equation*}
The intuition is that to obtain the maximal solution $[ p, q]$ to an
equation $[ l, r]\obar[ p, q]\sqsubseteq_\Spec[ l', r']$, whether $p$
and $q$ must restrain the interval in the intersection, or can be $0$
and $\infty$, respectively, depends on the position of $[ l, r]$
relative to $[ l', r']$, \cf~Figure~\ref{fi:quotient-mecs}.
% if the intervals are disjoint, the intersection $[ l, r]\obar[ p, q]$
% must be empty and hence quotient can be undefined.
It can be shown that the operator $\obslash$ is quantitatively
well-behaved.

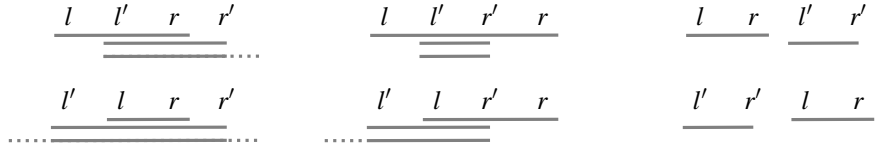
\begin{figure}[tp]
  \centering
  \begin{tikzpicture}[scale=.7]
    \tikzstyle{every node}=[font=\small,anchor=base]
    \tikzset{rangebar/.style={color=gray,very thick}}

    \begin{scope}
      \node (l) at (0,0) {$l$};
    \node (l') at (1,0) {$l'$};
    \node (r) at (2,0) {$r$};
    \node (r') at (3,0) {$r'$};
    \path[rangebar] (l.south west) edge (r.south east);
    \path[rangebar] ($(l'.south west)+(0,-1.5mm)$) edge ($(r'.south)+(0,-1.5mm)$);
    \path[rangebar] ($(l'.south west)+(0,-4mm)$) edge ($(r'.south)+(0mm,-4mm)$)
    edge [dotted] ($(r'.south west)+(10mm,-4mm)$);
    \end{scope}
    
    \begin{scope}[xshift=6cm]
      \node (l) at (0,0) {$l$};
    \node (l') at (1,0) {$l'$};
    \node (r') at (2,0) {$r'$};
    \node (r) at (3,0) {$r$};
    \path[rangebar] (l.south west) edge (r.south east);
    \path[rangebar] ($(l'.south west)+(0,-1.5mm)$) edge ($(r'.south)+(0,-1.5mm)$);
    \path[rangebar] ($(l'.south west)+(0,-4mm)$) edge ($(r'.south)+(0mm,-4mm)$);
    \end{scope}

    \begin{scope}[xshift=12cm]
      \node (l) at (0,0) {$l$};
    \node (r) at (1,0) {$r$};
    \node (l') at (2,0) {$l'$};
    \node (r') at (3,0) {$r'$};
    \path[rangebar] (l.south west) edge (r.south east);
    \path[rangebar] ($(l'.south west)+(0,-1.5mm)$) edge ($(r'.south)+(0,-1.5mm)$);
    \end{scope}

    \begin{scope}[yshift=-1.6cm]
      \begin{scope}
        \node (l') at (0,0) {$l'$};
        \node (l) at (1,0) {$l$};
        \node (r) at (2,0) {$r$};
        \node (r') at (3,0) {$r'$};
        \path[rangebar] (l.south west) edge (r.south east);
        \path[rangebar] ($(l'.south west)+(0,-1.5mm)$) edge
        ($(r'.south)+(0,-1.5mm)$);
        \path[rangebar] ($(l'.south
        west)+(-8mm,-4mm)$)  edge[dotted]  ($(l'.south
        west)+(0mm,-4mm)$);
        \path[rangebar]   ($(l'.south
        west)+(0mm,-4mm)$) edge ($(r'.south)+(0mm,-4mm)$) edge [dotted]
        ($(r'.south west)+(10mm,-4mm)$);
      \end{scope}
    
    \begin{scope}[xshift=6cm]
      \node (l') at (0,0) {$l'$}; \node (l) at (1,0) {$l$}; \node (r')
      at (2,0) {$r'$}; \node (r) at (3,0) {$r$}; \path[rangebar]
      (l.south west) edge (r.south east); \path[rangebar] ($(l'.south
      west)+(0,-1.5mm)$) edge ($(r'.south)+(0,-1.5mm)$);
      \path[rangebar] ($(l'.south west)+(-8mm,-4mm)$) edge[dotted] ($(l'.south west)+(0mm,-4mm)$); 
      \path[rangebar] ($(l'.south west)+(0mm,-4mm)$) edge
      ($(r'.south)+(0mm,-4mm)$);
    \end{scope}

    \begin{scope}[xshift=12cm]
      \node (l') at (0,0) {$l'$}; \node (r') at (1,0) {$r'$}; \node (l)
      at (2,0) {$l$}; \node (r) at (3,0) {$r$}; \path[rangebar]
      (l.south west) edge (r.south east); \path[rangebar] ($(l'.south
      west)+(0,-1.5mm)$) edge ($(r'.south)+(0,-1.5mm)$);
    \end{scope}
  \end{scope}

  \end{tikzpicture}
  \caption{%
    \label{fi:quotient-mecs}
    Quotient $[ l', r']\obslash[ l, r]$ of intervals, six cases.  Top
    bar: $[ l, r]$; middle bar: $[ l', r']$; bottom bar: quotient.  Note
    that for the two cases on the right, quotient is undefined.
  }
\end{figure}

We can lift our quotient from the semantic ITMTS level to MECS as
follows: A clock constraint in $\Phi( \Sigma)$ is equivalent to a
mapping $\Sigma\to \mathbbm J$, where $\mathbbm J=\{[ x, y]\mid x\in
\Nat, y\in \Nat\cup\{ \infty\}, x\le y\}\subseteq \I$ denotes the set of
closed extended non-negative integer intervals, and then we can define
$\phi'\obslash \phi= \lambda a.( \phi'( a)\obslash \phi( a))$ with
$\obslash$ defined on intervals as above.  Our quotient of MECS is then
defined as in~\cite{DBLP:conf/icfem/BertrandLPR09}, but with their guard
operation replaced by our $\obslash$ (hence our quotient is different
from theirs, which is to be expected as the notions of refinement are
different).

\begin{theorem}
  Let $A$, $B$, $X$ be MECS for which $B\bbslash A$ exists, then
  $\llparenthesis B\bbslash A\rrparenthesis\equiv \llparenthesis
  B\rrparenthesis\bbslash \llparenthesis A\rrparenthesis$.  If $A$ is
  strongly deterministic, then
  $d_m( X, B\bbslash A)\le d_m( A\| X, B)$, and $X\le_m B\bbslash A$ if
  and only if $A\| X\le_m B$.
\end{theorem}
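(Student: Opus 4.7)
The plan is to factor the theorem through the general quantitative SMTS framework of Section~2, relying on the fact that the interval operator $\obslash$ is, as just observed, quantitatively well-behaved.

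First, I would establish the semantic equivalence $\llparenthesis B\bbslash A\rrparenthesis\equiv_m \llparenthesis B\rrparenthesis\bbslash \llparenthesis A\rrparenthesis$. The MECS-level quotient is defined by exactly the same inference rules as the SMTS quotient, but with the pointwise lifting of $\obslash$ from intervals to clock constraints. Under $\llparenthesis\cdot\rrparenthesis$, a MECS edge $q\mayto[ a, g]$ produces exactly the family of ITMTS transitions $( q, v)\mayto[ a, 0]$ with $v\models g$, and the satisfaction relation for $\phi'\obslash \phi$ corresponds, componentwise, to applying $\obslash$ on the individual interval labels induced by $v$; the universal sink state $u$, the \must-predecessor pruning, and the delay self-loops then translate between the two levels in the same way as for structural composition. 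The witnessing two-sided modal refinement is given by the natural relation pairing a state $(( t, s), v)$ on one side with $(( t, v),( s, v))$ on the other.

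Second, strong determinism of $A$ ensures that $\llparenthesis A\rrparenthesis$ is deterministic in the ITMTS sense, as spelled out just before the theorem. Since $\obslash$ is quantitatively well-behaved, Theorem~3 of~\cite{conf/csr/BauerFLT12} applies to the ITMTS $\llparenthesis A\rrparenthesis$, $\llparenthesis B\rrparenthesis$, $\llparenthesis X\rrparenthesis$ and yields the relevant $h_m$-inequality relating $\llparenthesis X\rrparenthesis$ and $\llparenthesis B\rrparenthesis\bbslash \llparenthesis A\rrparenthesis$ to $\llparenthesis A\rrparenthesis\|\llparenthesis X\rrparenthesis$ and $\llparenthesis B\rrparenthesis$. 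Applying the monotone homomorphism $g$, combining with step one and with $\llparenthesis A\| X\rrparenthesis\equiv_m \llparenthesis A\rrparenthesis\|\llparenthesis X\rrparenthesis$ from the preceding theorem, then yields the claimed distance bound at the MECS level.

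Third, the Boolean equivalence $X\le_m B\bbslash A\iff A\| X\le_m B$ follows either directly from the classical qualitative quotient property of~\cite{journals/mscs/BauerJLLS11} applied at the ITMTS level via step one, or as the zero-distance instance of the quantitative bound, combined, for the converse direction, with monotonicity of structural composition applied to the defining property $\llparenthesis A\|( B\bbslash A)\rrparenthesis\le_m \llparenthesis B\rrparenthesis$.

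The principal obstacle I expect to be step one: matching the MECS-level quotient, defined syntactically by manipulation of clock constraints, with the SMTS-level quotient on the (uncountably branching) ITMTS semantics requires careful bookkeeping of the universal state, of edges targeting pruned locations, and of the interaction between clock resets, the valuation index $v$, and the pointwise lifting of $\obslash$. Once step one is settled, the quantitative bound and the iff are near-immediate consequences of the general framework.
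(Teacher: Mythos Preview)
Your approach is correct and essentially identical to the paper's: the semantic equivalence is dispatched as ``clear from the definitions'', and both the quantitative bound and the Boolean equivalence are obtained from \cite[Thm.~3]{conf/csr/BauerFLT12} at the ITMTS level (using that $\obslash$ is quantitatively well-behaved and $\llparenthesis A\rrparenthesis$ is deterministic), followed by applying $g$. The only difference is emphasis: you flag the semantic equivalence as the step requiring real bookkeeping, whereas the paper waves it through in a single clause.
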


\begin{proof}
  $\llparenthesis B\bbslash A\rrparenthesis\equiv \llparenthesis
  B\rrparenthesis\bbslash \llparenthesis A\rrparenthesis$ is clear from
  the definitions.  For the second part, $X\le_m B\bbslash A$ if
  and only if $A\| X\le_m B$ by~\cite[Thm.~3]{conf/csr/BauerFLT12}, and
  by the same theorem, $h_m( X, B\bbslash A)\sqsubseteq h_m( A\| X, B)$,
  so as $g: \LL\to \Realnn\cup\{ \infty\}$, the claim follows.
\end{proof}

%\paragraph{Conjunction.}

The \emph{conjunction} operator on labels of ITMTS is defined using
intersection of intervals like for structural composition, hence we let
$k\owedge \ell= k\obar \ell$ for $k, \ell\in \Spec$.  The intuition is
that transition intervals give constraints on timings; hence a
synchronized transition has to satisfy both interval constraints.  It
can be shown that $\owedge$ is \emph{not} bounded, but relaxed bounded
by $C_{ \beta, \gamma}( \alpha, \alpha')= \max( \alpha,
\alpha')\oplus_\LL \max( \beta, \gamma)$.

Our notion of conjunction is consistent with the one for MECS
in~\cite{DBLP:conf/icfem/BertrandLPR09}, and to make use of relaxed
boundedness, we need to lift the notion of quantitative widening from
the semantic ITMTS level to MECS.  This is done by defining, for a clock
constraint $\phi: \Sigma\to \mathbbm J$ and $n\in \Nat$, the
$n$-extended constraint $\phi_{ +n}= \lambda a. \phi( a)+[ -n, n]$ (this
is similar to a construction in~\cite{DBLP:conf/concur/BouyerLMST11}),
and then saying that a MECS $B$ is an $n$-widening of an MECS $A$ if
there is a relation $R\subseteq Q_A\times Q_B$ for which $( q_0^A,
q_0^B)\in R$, and for all $( q_A, q_B)\in R$, $q_A\mayto[ a, g]_A q_A'$
if and only if $q_B\mayto[ a, g_{ +n}] q_B'$ with $( q_B, q_B')\in R$
and similarly for \must~transitions.

\begin{theorem}
  Let $A$, $B$ be MECS.  With $\wedge$ the notion of greatest lower
  bound from~\cite{DBLP:conf/icfem/BertrandLPR09}, $\llparenthesis
  A\wedge B\rrparenthesis\equiv \llparenthesis A\rrparenthesis\wedge
  \llparenthesis B\rrparenthesis$.  If $A$ or $B$ is strongly
  deterministic and there is a MECS $C$ for which $d_m( C, A)\ne \infty$
  and $d_m( C, B)\ne \infty$, then there are an $n$-widening $A'$ of $A$
  and an $m$-widening $B'$ of $B$ for which $A'\wedge B'$ is defined,
  and such that $d_m( C, A'\wedge B')\le \max( d_m( C, A), d_m( C, B))+
  \max( n, m)$ for all MECS $C$ for which $d_m( C, A)\ne \infty$ and
  $d_m( C, B)\ne \infty$.
\end{theorem}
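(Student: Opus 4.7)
The plan is to decompose the theorem into three steps: (i) the semantic equivalence $\llparenthesis A\wedge B\rrparenthesis\equiv \llparenthesis A\rrparenthesis\wedge \llparenthesis B\rrparenthesis$, (ii) an application of the relaxed boundedness theorem \cite[Thm.~5]{conf/csr/BauerFLT12} at the semantic ITMTS level to produce widenings $S'\le_m \llparenthesis A\rrparenthesis$ and $T'\le_m \llparenthesis B\rrparenthesis$ with $S'\wedge T'$ defined, and (iii) a lifting of these semantic widenings back to the syntactic MECS $n$- and $m$-widenings $A'$, $B'$, followed by translating the $h_m$-bound through $g$ to obtain the stated $d_m$-bound.

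For step (i), I would unfold definitions: the conjunction on MECS defined in \cite{DBLP:conf/icfem/BertrandLPR09} but with their guard operation replaced by $\owedge= \obar$ on intervals matches exactly, location by location, the SMTS-level conjunction rules applied to $\llparenthesis A\rrparenthesis$ and $\llparenthesis B\rrparenthesis$, since for the two kinds of transitions in the ITMTS semantics (discrete $(a,[0,0])$ and delay $(\delta,[l,r])$), the conjunction operator $\owedge= \obar$ is exactly the constraint conjunction on guards and the interval intersection on delays. Pruning coincides with the inconsistency elimination in the MECS-level conjunction. Hence the two constructions yield modally two-sided refinement equivalent SMTS.

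For step (ii), since $\owedge$ is relaxed bounded by $C_{\beta,\gamma}(\alpha,\alpha')= \max(\alpha,\alpha')\oplus_\LL \max(\beta,\gamma)$ (as stated in the text), and since strong determinism of $A$ translates into determinism of $\llparenthesis A\rrparenthesis$ as observed earlier, \cite[Thm.~5]{conf/csr/BauerFLT12} applies directly: the hypothesis $d_m(C,A), d_m(C,B)\ne\infty$ gives $h_m(\llparenthesis C\rrparenthesis, \llparenthesis A\rrparenthesis)\ne\top_\LL$ and likewise for $B$, so there are $\beta$- and $\gamma$-widenings $S'$ of $\llparenthesis A\rrparenthesis$ and $T'$ of $\llparenthesis B\rrparenthesis$ with $S'\wedge T'$ defined and $h_m(\llparenthesis C\rrparenthesis, S'\wedge T')\sqsubseteq_\LL C_{\beta,\gamma}(h_m(\llparenthesis C\rrparenthesis, \llparenthesis A\rrparenthesis), h_m(\llparenthesis C\rrparenthesis, \llparenthesis B\rrparenthesis))$.

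The main obstacle, in step (iii), is the lifting: I must show that the ITMTS-level widenings $S'$, $T'$ arising from the abstract theorem can be realised as the semantic images of suitable MECS $n$-widenings $A'$ of $A$ and $m$-widenings $B'$ of $B$. The point is that the $\beta, \gamma\in \LL$ produced by the theorem act, on delay labels $(\delta,[l,r])$, by enlarging the interval symmetrically by some amount; choosing $n, m\in \Nat$ as the ceilings of the uniform bounds $\beta(0)$, $\gamma(0)$ (and observing that rounding up only makes the widenings wider, hence $S''\le_m S'$ has $S''$ obtained from $\llparenthesis A'\rrparenthesis$, etc.), the syntactic construction $\phi_{+n}= \lambda a.\phi(a)+[-n,n]$ on guards gives MECS $A', B'$ whose semantics dominate $S', T'$ and for which $A'\wedge B'$ is still defined by consistency with the MECS conjunction from step (i). Finally, applying $g$ to $h_m(\llparenthesis C\rrparenthesis, \llparenthesis A'\wedge B'\rrparenthesis)\sqsubseteq_\LL \max(h_m(\cdot,\cdot),h_m(\cdot,\cdot))\oplus_\LL \max(\beta,\gamma)$ and using that $g$ is a homomorphism for $\max$ and $\oplus_\LL$ on the values that matter (lead zero), together with $g(\beta)\le n$ and $g(\gamma)\le m$, yields $d_m(C, A'\wedge B')\le \max(d_m(C,A), d_m(C,B))+ \max(n,m)$, as claimed.
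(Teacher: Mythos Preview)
Your approach is essentially the same as the paper's. The paper's own proof is a two-sentence sketch: the first claim holds ``by definition'', and the second ``follows from~\cite[Thm.~5]{conf/csr/BauerFLT12} and the homomorphism property of $g$''. Your steps~(i), (ii), and the final passage through $g$ in~(iii) are exactly that argument; you simply unfold more.

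Where you differ is that you explicitly worry about the lifting of the abstract $\beta$-, $\gamma$-widenings of ITMTS back to syntactic $n$-, $m$-widenings of MECS, which the paper passes over in silence. That is a genuine issue the paper does not address. Your sketch of~(iii) is in the right spirit but is imprecise in places: the refinement direction ``$S''\le_m S'$'' is stated backwards (wider means the other system refines \emph{into} it), and taking $n=\lceil\beta(0)\rceil$ needs a word on why the value of $\beta$ at lead $0$ controls the interval enlargement uniformly. None of this contradicts the paper's proof, since the paper offers no argument for this step at all; you are being more careful than the paper, not less.
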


\begin{proof}
  $\llparenthesis A\wedge B\rrparenthesis\equiv \llparenthesis
  A\rrparenthesis\wedge \llparenthesis B\rrparenthesis$ by definition,
  and the second claim follows from~\cite[Thm.~5]{conf/csr/BauerFLT12}
  and the homomorphism property of $g: \LL\to \Realnn\cup\{ \infty\}$.
\end{proof}

%\section{Conclusion}

\bibliographystyle{eptcs}
\bibliography{wmbib}

\end{document}